\newtheorem{proposition}{Proposition}
\newtheorem{theorem}{Theorem}
\newcommand{\pname}[1]{\textsc{#1}}
\newcommand{\cclass}[1]{\textsf{#1}}
\newcommand{\MinCost}{\pname{Structural Optimal Jacobian Accumulation}}
\newcommand{\MinRep}{\pname{Minimum Edge Count}}
\newcommand{\NP}{\cclass{\textup{NP}}}
\newcommand{\innbrs}[2]{N^-_{{#1}}({#2})}
\newcommand{\outnbrs}[2]{N^+_{{#1}}({#2})}
\newcommand{\gElimVertex}[2]{{#1}_{/{#2}}}
\newcommand{\gElimSet}[2]{{#1}_{#2}}
\newcommand{\gElimSequence}[2]{\gElimSet{#1}{#2}}
\title{Structural Optimal Jacobian Accumulation and Minimum Edge Count are NP-Complete Under Vertex Elimination}
\author[1]{Matthias Bentert}
\author[2]{Alex Crane}
\author[1]{Pål Grønås Drange}
\author[2]{Yosuke Mizutani}
\author[2]{Blair~D.~Sullivan}
\affil[1]{University of Bergen, Norway}
\affil[2]{University of Utah, USA}
\date{}
\tikzset{crossing/.style={cross out, draw=red, minimum size=2*(#1-\pgflinewidth), inner sep=0pt, outer sep=1pt, very thick}, crossing/.default={4pt}}
\tikzset{>={Latex[width=2mm,length=3mm]}}
\definecolor{cbred}{RGB}{200, 50, 80}
\definecolor{cbgreen}{RGB}{0, 128, 128}
\definecolor{cbblue}{RGB}{0, 76, 153}
\definecolor{cbyellow}{RGB}{204, 153, 0}
\newlength{\RoundedBoxWidth}
\newsavebox{\GrayRoundedBox}
\newenvironment{GrayBox}[1]%
   {\setlength{\RoundedBoxWidth}{.93\textwidth}
    \def\boxheading{#1}
    \begin{lrbox}{\GrayRoundedBox}
       \begin{minipage}{\RoundedBoxWidth}}%
   {   \end{minipage}
    \end{lrbox}
    \begin{center}
    \begin{tikzpicture}%
       \node(Text)[draw=black!20,fill=white,rounded corners,%
             inner sep=2ex,text width=\RoundedBoxWidth]%
             {\usebox{\GrayRoundedBox}};
        \coordinate(x) at (current bounding box.north west);
        \node [draw=white,rectangle,inner sep=3pt,anchor=north west,fill=white]
        at ($(x)+(6pt,.75em)$) {\boxheading};
    \end{tikzpicture}
    \end{center}}
\newenvironment{defproblemx}[2][]{\noindent\ignorespaces%
                                \FrameSep=6pt%
                                \parindent=0pt%
                \vspace*{-1.5em}
                \ifthenelse{\isempty{#1}}{%
                  \begin{GrayBox}{\textsc{#2}}%
                }{%
                  \begin{GrayBox}{\textsc{#2}  parameterized by~{#1}}%
                }
                \begin{tabular*}{\textwidth}{@{\hspace{.1em}} >{\itshape} p{1.8cm} p{0.8\textwidth} @{}}%
            }{
                \end{tabular*}%
                \end{GrayBox}%
                \ignorespacesafterend
            }
\newcommand{\defproblem}[3]{
  \begin{defproblemx}{#1}
    Input:  & #2 \\
    Question: & #3
  \end{defproblemx}
}%
\begin{document}

\maketitle

\begin{abstract}
We study graph-theoretic formulations of two fundamental problems in algorithmic differentiation.
The first (\MinCost) is that of computing a Jacobian while minimizing multiplications. The second (\MinRep) is to find a minimum-size computational graph.
For both problems, we consider the vertex elimination operation.
Our main contribution is to show that both problems are \cclass{NP}-complete, thus resolving longstanding open questions.
In contrast to prior work, our reduction for~\MinCost{} does not rely on any assumptions about the algebraic relationships between local partial derivatives; we allow these values to be mutually independent.
We also provide $O^*(2^n)$-time exact algorithms for both problems, and show that under the exponential time hypothesis these running times are essentially tight.
Finally, we provide a data reduction rule for \MinCost{} by showing that false twins may always be eliminated consecutively.
\end{abstract}

\section{Introduction}

A core subroutine in numerous scientific computing applications is the efficient computation of derivatives.
If inexactness is acceptable, finite difference methods~\cite{quarteroni2006numerical} may be applicable.
Meanwhile, computer algebra packages allow for exact symbolic computation, though
these methods suffer from the requirement of a closed-form expression of the function $F$ to be differentiated as well as poor running time in practice.
A third approach is \emph{algorithmic differentiation}, also sometimes called \emph{automatic differentiation}. Algorithmic differentiation provides almost exact computations, incurring only rounding errors (in contrast to finite difference methods which also incur truncation errors). Furthermore, the running time required by algorithmic differentiation is bounded by the time taken to compute $F$ (in contrast to symbolic methods).
We refer to the works by Griewank and Walther~\cite{griewank2008evaluating} and by Naumann~\cite{naumann2011art} for an introduction and numerous applications.

In algorithmic differentiation, we assume that the function $F$ is implemented as a numerical program.
The key insight is that such programs consist of compositions of elementary functions, e.g., multiplication, \emph{sin}, \emph{cos}, etc., for which the derivatives are known or easily computable.
Then, derivative computations for the function $F$ follow by application of the chain rule.
The relevant numerical program can be modeled as a directed acyclic graph~(DAG)~$D = (S \uplus I, E)$, referred to as the \emph{computational graph} of $F$~\cite{griewank2005analysis}.
The \emph{source} and \emph{sink} vertices $S$ model the inputs and outputs of $F$, respectively. The \emph{internal} vertices~$I$ model elementary function calls. Arcs (directed edges) model data dependencies.
Let~$\mathcal P_{s, t}$ denote the set of all paths from source $s$ to sink $t$. If we associate to each arc $(u,v)$ the known (or easily computable) local partial derivative~$\pdv{v}{u}$,
then the chain rule allows us to compute the derivative of $t$ with respect to $s$ by
\begin{align}\label{eq:baur-chain-rule}
  \tag{1}
  \odv{t}{s} = \sum_{P \in \mathcal P_{s,t}} \prod_{(u, v) \in P} \pdv{v}{u},
\end{align}
as shown by Bauer~\cite{bauer1974computational}.
This weighted DAG is the \emph{linearized} computational graph.

Using the linearized computational graph, we may model the computation of a Jacobian as follows.
An \emph{elimination} of an internal vertex $v$ is the deletion of $v$ and the creation of all arcs from each in-neighbor to each out-neighbor of $v$ which are not already present.
We write~$\gElimVertex{D}{v}$ for the resulting DAG.
An \emph{elimination sequence}~${\sigma = (v_1, v_2, \ldots, v_\ell)}$ of length~$\ell$ is a tuple of internal vertices, and we denote
by~$\gElimSequence{D}{\sigma} = (((\gElimVertex{D}{v_1})_{/v_2})\ldots)_{/v_\ell}$ the result of eliminating these vertices in the order given by~$\sigma$.
We call $\sigma$ a \emph{total} elimination sequence if $\ell = |I|$.
If $\sigma$ is a total elimination sequence, then with appropriate local partial derivative computations or updates during the sequence, $\gElimSequence{D}{\sigma}$ can be thought of as a bipartite DAG (with sources on one side and sinks on the other) representing the Jacobian matrix of the associated numerical program.
To reflect the number of multiplications needed to maintain correctness of Equation~(\ref{eq:baur-chain-rule}),
we say that the \emph{cost} of eliminating an internal vertex~$v$ is the \emph{Markowitz degree}~$\mu(v)$ of $v$, that is, the in-degree of~$v$ times the out-degree of~$v$.
The cost of an elimination sequence is the sum of the costs of the involved eliminations.
We can now phrase the problem of computing a Jacobian with few multiplications in purely graph-theoretic terms:\looseness=-1

\bigskip
\defproblem{\MinCost{}}
{A DAG $D = (S \uplus I, A)$ and an integer~$k$.}
{Does there exist a total elimination sequence of cost at most $k$?}

\textsc{Optimal Jacobian Accumulation} is known to be \cclass{NP}-complete only under certain assumptions regarding algebraic dependencies between local partial derivatives~\cite{naumann2008optimal}.
Despite this, heuristics used in practice are based on the purely structural formulation presented here~\cite{chen2012integer,forth2004jacobian,griewank2003accumulating,pryce2008fast} and understanding the complexity of this formulation, open since at least 1993~\cite{griewank1993some}, has recently been
highlighted as an important problem in applied combinatorics~\cite{aksoy2023seven}.

\bigskip
A solution to \MinCost{} is always a total elimination sequence, resulting in a bipartite DAG representing the Jacobian.
A related problem, first posed by
Griewank and Vogel in 2005~\cite{griewank2005analysis}, is to identify a (not necessarily total) elimination sequence which results in a computational graph with a minimum number of arcs:

\bigskip
\defproblem{\MinRep{}}
{A DAG $D = (S \uplus I, A)$ and an integer~$k$.}
{Does there exist an elimination sequence~$\sigma$ (of any length) such that~$D_{\sigma}$ contains at most~$k$ arcs?}

The motivation to solve this problem is twofold. First, suppose that our function~$F$ is a map from $\mathbb{R}^n$ to~$\mathbb{R}^m$ and we wish to multiply its Jacobian (a matrix in $\mathbb{R}^{m \times n}$) by a matrix $S \in \mathbb{R}^{n \times q}$.
We may model this computation by augmenting the linearized computational graph $D$ with $q$ new source vertices. For each new source $i \in \{1, 2, \ldots, q\}$ and each original source $j \in \{1, 2, \ldots, n\}$, we add the arc $(i, j)$.
By labeling the new arcs with entries from the matrix $S$, we may obtain the result of the matrix multiplication via application of Equation~(\ref{eq:baur-chain-rule}).
We refer to the work by Mosenkis and Naumann~\cite{mosenkis2012optimality} for a formal presentation.
The number of multiplications required to use Equation~(\ref{eq:baur-chain-rule}) in this way grows with the number of arcs in~$D$, thus motivating the computation of a small (linearized) computational graph.
A second motivation is that~$D$ can sometimes reveal useful information not evident in the Jacobian matrix.
This situation, known as \emph{scarcity}, is described by Griewank and Vogel~\cite{griewank2005analysis}.
Thus, it is desirable to store~$D$, rather than the Jacobian matrix, and consequently it is also desirable for $D$ to be as small as possible.

Despite these motivations and several algorithmic studies~\cite{griewank2005analysis,lyons2008practical,mosenkis2012optimality}, the computational complexity of~\MinRep{} has remained open since its introduction~\cite{griewank2005analysis,mosenkis2012optimality}.
Like~\MinCost{}, resolving this question has recently been highlighted as an important open problem~\cite{aksoy2023seven}.

\paragraph*{Our Results.}
We show that both \MinCost{} and \MinRep{} are \NP-complete, resolving the key complexity questions which have stood open since 1993~\cite{griewank1993some} and 2005~\cite{griewank2005analysis}, respectively.
Furthermore, we prove that unless the exponential time hypothesis (ETH)\footnote{The ETH is a popular complexity assumption that states that \textsc{3-Sat} cannot be solved in subexponential time. See \Cref{sec:prelim} for more details.} fails, neither problem
admits a subexponential algorithm, i.e., an algorithm with running time
$2^{o(n+m)}$.
We complement our lower bounds by providing~$O^*(2^n)$-time algorithms for both problems.

\section{Preliminaries and Basic Observations}
\label{sec:prelim}
In this section, we define the notation we use throughout the paper, introduce relevant concepts from the existing literature, and show two useful basic propositions.

\paragraph*{Notation.}
For a positive integer~$n$, we use~$[n]$ to denote the set~$\{1,2,\ldots,n\}$.
We use standard graph terminology.
In particular, a graph~$G=(V,E)$ or~$D=(V,A)$ is a pair where~$V$ denotes the set of vertices and~$E$ and~$A$ denote the set of (undirected) edges or (directed) arcs, respectively.
We use~$n$ to indicate the number of vertices in a graph and~$m$ to indicate the number of edges or arcs.
For an (undirected) edge between two vertices~$u$ and~$v$ we write~$\{u, v\}$, and for an arc (a directed edge) from~$u$ to~$v$ we write~$(u, v)$.
Given a vertex $v \in V$, we denote by~$\innbrs{D}{v}$ and~$\outnbrs{D}{v}$ the open in- and out-neighborhoods of $v$, respectively.
The \emph{Markowitz degree} is defined to be~$\mu_D(v) =\deg_D^-(v)\cdot\deg_D^+(v) = |\innbrs{D}{v}|\cdot|\outnbrs{D}{v}|$.
If the graph is clear from context, we omit the subscript in the above notation.
We say that two vertices $u$ and $v$ are \emph{false twins} in a directed graph~$D$ if $\innbrs{D}{u} = \innbrs{D}{v}$ and~$\outnbrs{D}{u} = \outnbrs{D}{v}$. 
Given sequences~$\sigma_1=(a_1,a_2,\ldots,a_i)$ and~$\sigma_2=(b_1,b_2,\ldots,b_j)$, we write~$(\sigma_1,\sigma_2)$ for the combined sequence~$(a_1,a_2,\ldots,a_i,b_1,b_2,\ldots,b_j)$, and we generalize this notation to more than two sequences.

\paragraph*{Reductions and the ETH.}
We assume the reader to be familiar with basic concepts in complexity theory like big-O (Bachmann--Landau) notation, \NP-completeness, and polynomial-time many-one reductions (also known as Karp reductions).
We refer to the standard textbook by Garey and Johnson~\cite{GJ79} for an introduction.
We use~$O^*$ to hide factors that are polynomial in the input size and call a polynomial-time many-one reduction a linear reduction when the size of the constructed instance~$I'$ is linear in the size of the original instance~$I$, that is, $|I'| \in O(|I|)$. 
The \emph{exponential time hypothesis~(ETH)}~\cite{IP01} states that there is some~$\varepsilon > 0$ such that each algorithm solving \textsc{3-Sat} takes at least~$2^{\varepsilon n + o(n)}$~time, where~$n$ is the number of variables in the input instance.
Assuming the ETH, \textsc{3-Sat} and many other problems cannot be solved in subexponential ($2^{o(n+m)}$) time~\cite{IPZ01}.
It is known that if there is a linear reduction from a problem~$A$ to a problem~$B$ and~$A$ cannot be solved in subexponential time, then also~$B$ cannot be solved in subexponential time~\cite{IPZ01}.

\paragraph*{Fundamental Observations.}
We next show two useful observations.
The first one states that the order in which vertices are eliminated does not affect the resulting graph (note that it may still affect the cost of the elimination sequence).
This is a folklore result, but to our knowledge no proof is known. Our argument can be seen as an adaptation of one used by Rose and Tarjan to prove a closely related result~\cite{rose1978algorithmic}.

\begin{proposition}
    \label{lem:order}
    Let~$D=(V,A)$ be a DAG, let~$X \subseteq V$ be a set of vertices, and let~$\sigma_1$ and~$\sigma_2$ be two permutations of the vertices in~$X$. Then, $D_{\sigma_1} = D_{\sigma_2}$.
\end{proposition}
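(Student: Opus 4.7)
The plan is to give an order-independent characterization of $D_\sigma$ that makes no reference to the permutation. Specifically, I claim that for any permutation $\sigma$ of $X$, the graph $D_\sigma$ has vertex set $V \setminus X$ and contains the arc $(u,v)$ if and only if there is a directed path from $u$ to $v$ in $D$ whose internal vertices all lie in $X$ (where a direct arc $(u,v) \in A$ counts as a path with no internal vertices). Granting this, the proposition follows immediately, since the right-hand side is independent of the order of elimination, so both $D_{\sigma_1}$ and $D_{\sigma_2}$ coincide with the graph described.

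To establish the characterization, I would proceed by induction on $|X|$. The base case $|X| = 0$ is immediate. For the inductive step, let $\sigma = (v_1, v_2, \ldots, v_\ell)$ be a permutation of $X$ and write $\sigma' = (v_2, \ldots, v_\ell)$, so that $D_\sigma = (\gElimVertex{D}{v_1})_{\sigma'}$. Applying the induction hypothesis to the DAG $\gElimVertex{D}{v_1}$ together with the set $X \setminus \{v_1\}$, the arcs of $D_\sigma$ are precisely the pairs $(u,v)$ admitting a directed path from $u$ to $v$ in $\gElimVertex{D}{v_1}$ whose internal vertices lie in $X \setminus \{v_1\}$.

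The remaining step is to relate such paths in $\gElimVertex{D}{v_1}$ to paths in $D$ with internal vertices in $X$. Each arc of $\gElimVertex{D}{v_1}$ is either an arc of $D$ not incident to $v_1$, or a shortcut $(a,b)$ arising from the elimination of $v_1$ (i.e., $(a,v_1), (v_1,b) \in A$). Expanding every shortcut in a path into the corresponding two-arc detour through $v_1$ produces a directed walk in $D$ from $u$ to $v$ with internal vertices in $X$; conversely, contracting any occurrence of $v_1$ together with its two flanking arcs into a single shortcut yields a directed walk in $\gElimVertex{D}{v_1}$ with internal vertices in $X \setminus \{v_1\}$.

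The main obstacle is ensuring that these walks are in fact simple paths, so that they serve as valid witnesses for the arc-existence claim. Here the hypothesis that $D$ is a DAG is essential: in an acyclic digraph every directed walk is automatically a simple path (any repetition of a vertex would yield a directed cycle), so the translation between path-families on the two sides is well-defined and bijective at the level of arc existence. This closes the induction and establishes the characterization, completing the proof.
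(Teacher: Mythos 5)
Your proof is correct, and while it is in the same spirit as the paper's (both relate arcs of the eliminated graph to paths in $D$), your key lemma is genuinely different and in fact sharper. The paper's central claim is that a single elimination preserves reachability between surviving vertices; it iterates this over the sequence and then argues by contradiction that an arc $(u,v)$ present in $D_{\sigma_1}$ but absent from $D_{\sigma_2}$ would force a $u$--$v$ path in $D$ to both exist and not exist. Your lemma is the exact, order-free characterization of the arc set: $(u,v)$ is an arc of $D_\sigma$ if and only if $D$ contains a directed $u$--$v$ path all of whose internal vertices lie in $X$. This is strictly more information than reachability preservation, and it makes the conclusion immediate rather than by contradiction. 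It is also the invariant one actually wants here: the absence of the arc $(u,v)$ in $D_{\sigma_2}$ does not by itself rule out a $u$--$v$ path in $D$ (such a path could route through vertices outside $X$), so the coarser ``reachability is preserved'' statement does not pin down the arc set on its own; your restriction to paths with internal vertices in $X$ is precisely what closes that gap. Two small points you should make explicit, neither of which is a real gap: (i) $\gElimVertex{D}{v_1}$ is again a DAG (a directed cycle there would, after expanding each shortcut arc through $v_1$, yield a closed directed walk and hence a cycle in $D$), which you need both to apply the induction hypothesis and to reuse your walks-are-paths observation at the next level; and (ii) in the contraction direction, $v_1$ occurs at most once on the path since the path is simple and $v_1$ is internal (as $u,v \notin X$), so the contraction is a single local replacement and directly yields a path rather than a general walk.
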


\begin{proof}
	We first show for any DAG~$D=(V,A)$ and any three vertices~${u,v,w \in V}$ that there is a directed path from~$u$ to~$v$ in~$D$ if and only if there is a directed path from~$u$ to~$v$ in~$\gElimVertex{D}{w}$.
	To this end, first assume that there is a directed path~$P$ from~$u$ to~$v$ in~$D$.
	If~$P$ does not contain~$w$, then~$P$ is also a directed path in~$\gElimVertex{D}{w}$.
	Otherwise, let~$x,y$ be the vertices before and after~$w$ in~$P$, respectively.
	Since the elimination of~$w$ adds an arc from~$x$ to~$y$, there is also a directed path from~$u$ to~$v$ in this case.
	Now assume that there is a directed path in~$\gElimVertex{D}{w}$.
	We assume without loss of generality that~$P$ is a shortest path in~$D$.
	There are again two cases.
	Either~$P$ is also a directed path in~$D$ or there is at least one arc~$(x,y)$ in~$P$ that is not present in~$D$.
	In the first case, $P$ shows that there is a directed path from~$u$ to~$v$ in~$D$ by definition.
	In the second case, we consider the first arc~$(x,y)$ in~$P$ that is not contained in~$D$.
	Note that by construction~$(x,y)$ is only added to~$\gElimVertex{D}{w}$ if~$x$ is an in-neighbor of~$w$ and~$y$ is an out-neighbor of~$w$ in~$D$.
	Moreover, since~$P$ is a shortest path, there is no other vertex~$y'$ in~$P$ that is also a out-neighbor of~$w$ as otherwise the arc~$(x,y')$ exists in~$\gElimVertex{D}{w}$ contradicting that~$P$ is a shortest path.
	We can now replace the arc~$(x,y)$ in~$P$ by a subpath~$(x,w,y)$ to get a directed path from~$u$ to~$v$ in~$D$.
	
	Let~$\sigma = (w_1,w_2,\ldots,w_k)$ be a sequence.
  By induction on $k$ (using the above argument), it holds for any two vertices~$u,v \in V \setminus \{w_1,w_2,\ldots,w_k\}$ that there is a directed path from~$u$ to~$v$ in~$D$ if and only if there one in~$(((\gElimVertex{D}{w_1})_{/w_2})\ldots)_{/w_k} = D_{\sigma}$.

	Now assume towards a contradiction that~$D_{\sigma_1} \neq D_{\sigma_2}$.
	Note that both contain the same set~$V \setminus X$ of vertices.
	We assume without loss of generality that there is an arc~$(u,v)$ that exists in~$D_{\sigma_1}$ but not in~$D_{\sigma_2}$.
	By the above argument, since~$(u,v)$ appears in~$D_{\sigma_1}$ there is a directed path from~$u$ to~$v$ in~$D$.
	However, since~$(u,v)$ does not appear in~$D_{\sigma_2}$, there is no directed path from~$u$ to~$v$ in~$D$, a contradiction.
	This concludes the proof.
\end{proof}

Let~$\sigma$ be a sequence of internal vertices, and let~$X$ be the set of vertices appearing in~$\sigma$. In the rest of this paper, we may use~$\gElimSet{D}{X}$ to denote the graph~$\gElimSequence{D}{\sigma}$. By~\Cref{lem:order}, this notation is well-defined.

To conclude this section, we show that false twins can be handled uniformly in \MinCost. Let $T$ be a set of false twins, i.e., $u$ and $v$ are false twins for every $u, v \in T$. Then we may assume, without loss of generality, that eliminating any $u \in T$ also entails eliminating the rest of the vertices in $T$ immediately afterward.

\begin{proposition}\label{lem:twins}
  Let~$D=(S \uplus I, A)$ be a DAG and let $T \subseteq I$ be a set of false twins. Then, there exists an optimal elimination sequence (for \MinCost) that eliminates the vertices of $T$ consecutively.
\end{proposition}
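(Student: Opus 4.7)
The plan is an exchange argument driven by a potential function. Define
\[\Psi(\sigma) = \sum_{u \in I \setminus T} b_u(\sigma) \cdot a_u(\sigma),\]
where $b_u(\sigma)$ and $a_u(\sigma)$ are, respectively, the numbers of $T$-vertices appearing before and after the non-twin $u$ in $\sigma$. Then $\Psi(\sigma) = 0$ iff the vertices of $T$ occupy consecutive positions in $\sigma$. Among all optimal sequences I pick one, $\sigma^*$, that minimizes $\Psi$, and assume for contradiction that $\Psi(\sigma^*) > 0$. Then in the subsequence of $\sigma^*$ restricted to $T$, two consecutive twins $t = t_j$, $t' = t_{j+1}$ (indexing twins by their order of appearance in $\sigma^*$) are separated by a nonempty subsequence $\tau$ of non-twins, so $\sigma^* = \alpha, t, \tau, t', \beta$.

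The core step would be to compare $\sigma^*$ against $\sigma^{(a)} = \alpha, \tau, t, t', \beta$ and $\sigma^{(b)} = \alpha, t, t', \tau, \beta$ and prove the key inequality $\cost(\sigma^{(a)}) + \cost(\sigma^{(b)}) \le 2\,\cost(\sigma^*)$. By \Cref{lem:order} the three sequences yield identical graphs before $\alpha$ and after the block $\{t, t'\} \cup \tau$, so the $\alpha$ and $\beta$ portions contribute equally. Since false twins are preserved under elimination of non-twins (a quick check from the neighborhood definition), the vertices of $T$ still present share a common Markowitz degree throughout, which causes the $t$- and $t'$-elimination contributions to cancel cleanly in all three expansions. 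What remains is the concavity statement
\[\cost_{\gElimSet{D}{\alpha}}(\tau) + \cost_{\gElimSet{D}{\alpha \cup \{t, t'\}}}(\tau) \le 2\,\cost_{\gElimSet{D}{\alpha \cup \{t\}}}(\tau).\]
To establish this, I would run $\tau$ in three parallel ``worlds'' starting from $\gElimSet{D}{\alpha}$, $\gElimSet{D}{\alpha \cup \{t\}}$, and $\gElimSet{D}{\alpha \cup \{t, t'\}}$, maintaining the invariant that the three graphs differ only by the presence of $t, t'$ and that in each world where a twin has already been removed, every arc from the twins' current in-neighborhood $N^-_i$ to their out-neighborhood $N^+_i$ is present. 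A per-step inspection shows that the instantaneous Markowitz degree of a non-twin $u$ matches in all three worlds when $u$ is not adjacent to the twins, while for $u \in N^-_i$ the three degrees take the form $|N^-(u)|(2+|X_u|)$, $|N^-(u)|(1+|X_u \cup N^+_i|)$, $|N^-(u)|\cdot|X_u \cup N^+_i|$ (with $X_u$ the out-neighbors of $u$ outside $\{t,t'\}$), satisfying the concave pointwise bound with slack $|N^-(u)| \cdot |N^+_i \setminus X_u| \ge 0$; a symmetric analysis handles $u \in N^+_i$. Summing over all steps of $\tau$ yields the displayed inequality.

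Granting the key inequality, both $\sigma^{(a)}$ and $\sigma^{(b)}$ have cost at least $\cost(\sigma^*)$ by optimality of $\sigma^*$, and their sum is at most $2\,\cost(\sigma^*)$; hence both are themselves optimal. A direct computation of how $\Psi$ changes shows that each $u \in \tau$ contributes a shift of $2j - k - 1$ under $\sigma^{(a)}$ and $k - 2j - 1$ under $\sigma^{(b)}$, where $k = |T|$. These shifts sum to $-2$ per $u$, so at least one of $\sigma^{(a)}, \sigma^{(b)}$ has $\Psi$ strictly smaller than $\sigma^*$ by at least $|\tau| \ge 1$---an optimal sequence with smaller potential than $\sigma^*$, contradicting the choice of $\sigma^*$. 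The main obstacle, I expect, will be the three-world invariant: verifying that eliminating a non-twin in $N^-_i$ or $N^+_i$ updates $N^\pm_i$ consistently across worlds and that the newly induced elimination arcs restore the ``all arcs from $N^-_i$ to $N^+_i$ are present'' condition in the two reduced worlds.
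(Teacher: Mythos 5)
Your proposal is correct and its core coincides with the paper's: you exchange a twin past a maximal block of intervening non-twins, observe that the twin-elimination costs cancel by twin symmetry, and reduce everything to the concavity inequality $d_1 + d_3 \le 2d_2$, which you verify by the same per-vertex, three-case comparison of in-/out-degrees across the three worlds (your explicit formulas $|N^-(u)|(2+|X_u|)$, $|N^-(u)|(1+|X_u\cup N^+_i|)$, $|N^-(u)|\cdot|X_u\cup N^+_i|$ are exactly the paper's degree bookkeeping). The only difference is the outer wrapper: you iterate the exchange via the potential $\Psi$ minimized over optimal sequences, whereas the paper settles $|T|=2$ and then inducts on $|T|$; both wrappers hinge on the same observation that \emph{both} exchanged sequences remain optimal.
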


\begin{proof}
Let $T \subseteq I$ be a set of false twins in~$D$. 
We first prove the result when~$|T| = 2$. Let $T = \{u, v\}$, and let $\sigma$ be an optimal solution. We may assume that~$u$ and $v$ are eliminated non-consecutively in $\sigma$, as otherwise the proof is complete.
We further assume without loss of generality that~$u$ is eliminated before~$v$ in~$\sigma$.
Let~$\sigma_1$ be the subsequence of~$\sigma$ before~$u$, $\sigma_2$ be the subsequence between~$u$ and~$v$ and~$\sigma_3$ be the subsequence after~$v$.
Let~$X_1,X_2$, and~$X_3$ be the sets of vertices appearing in~$\sigma_1,\sigma_2$, and~$\sigma_3$, respectively.
Note that~$X_1$ and~$X_3$ might be empty, but~$X_2$ contains at least one vertex.
Let $\sigma'=(\sigma_1,\sigma_2,u,v,\sigma_3)$ and~$\sigma''=(\sigma_1,u,v,\sigma_2,\sigma_3)$.
Let $c, c'$, and~$c''$ be the costs of $\sigma, \sigma'$, and $\sigma''$, respectively. 
Since $\sigma$ is optimal, it holds that~$c \leq c'$ and $c \leq c''$.
Now, we claim that both~$\sigma'$ and $\sigma''$ are optimal (it suffices to show that either $\sigma'$ or $\sigma''$ is optimal, but it will be useful later to prove that both are).
Assume otherwise, so at least one of the inequalities $c \leq c'$ or $c \leq c''$ is strict.
This implies~$2c < c' + c''$.
We will show that this inequality leads to a contradiction.

It holds that the cost of~$\sigma_1$ in $D$ and the cost of~$\sigma_3$ in~$D_{X_1 \cup X_2 \cup \{u,v\}}$ are accounted for identically in the total costs of $\sigma, \sigma'$, and~$\sigma''$. Thus, any difference in the values of~$c,c',$ and~$c''$ is attributable entirely to differing costs of eliminating $u, v$ and the vertices in $X_2$.
Let~$d_1,d_2,$ and~$d_3$ be the costs of~$\sigma_2$ in~$D_{X_1},D_{X_1 \cup \{u\}}$, and~$D_{X_1 \cup \{u,v\}}$, respectively. Note that these terms are well-defined by \cref{lem:order}.
This implies
\begin{align}
\begin{split}
2\big(\mu_{D_{X_1}}(u) + d_2 + \mu_{D_{X_1 \cup \{u\} \cup X_2}}(v)\big) <&\\ \big(d_1 + \mu_{D_{X_1 \cup X_2}}(u) + \mu_{D_{X_1 \cup X_2 \cup \{u\}}}&(v)\big) + \big( \mu_{D_{X_1}}(u) + \mu_{D_{X_1 \cup \{u\}}}(v)+d_3\big). \label{eq:cost}
\end{split}
\end{align}

Moreover, since~$u$ and~$v$ are false twins and the elimination of~$u$ does not change the cost of eliminating~$v$, it holds that~$\mu_{D_{X_1 \cup X_2}}(u) = \mu_{D_{X_1 \cup X_2 \cup \{u\}}}(v)$ and~${\mu_{D_{X_1}}(u) = \mu_{D_{X_1 \cup \{u\}}}(v)}$.
Substituting this into Inequality~(\ref{eq:cost}) yields~$2d_2 < d_1 + d_3$.
Next, let~${\sigma_2 = (w_1,w_2,\ldots,w_k)}$ and let~${W_i = X_1 \cup \{w_1,w_2,\ldots,w_{i-1}\}}$ for each~$i \in [k]$.
Notice that it holds that~${d_1 = \sum_{i \in [k]} \mu_{D_{W_i}}(w_i)}$, ${d_2 = \sum_{i \in [k]} \mu_{D_{W_i \cup \{u\}}}(w_i)}$, and~$d_3 =  \sum_{i \in [k]} \mu_{D_{W_i \cup \{u,v\}}}(w_i)$.
To conclude the proof, we will show that for each~$w_i \in X_2$,
\[2 \mu_{D_{W_i \cup \{u\}}}(w_i) \geq \mu_{D_{W_i}}(w_i) +  \mu_{D_{W_i \cup \{u,v\}}}(w_i).\]
Note that this implies~$2d_2 \geq d_1 + d_3$, yielding the desired contradiction.
To show the above claim, we consider three cases: (i) $w_i$ is an out-neighbor of~$u$ in~$D_{W_i}$, (ii) $w_i$ is an in-neighbor of~$u$ in~$D_{W_i}$, and (iii) $w_i$ is neither an in- nor an out-neighbor of~$u$ in~$D_{W_i}$.
Note that since~$D_{W_i}$ is a DAG, $w_i$ cannot be both an in-neighbor and an out-neighbor of~$u$.
Moreover, since~$u$ and~$v$ are false twins, $w_i$ is an in-/out-neighbor of~$u$ if and only if it is an in-/out-neighbor of~$v$.
In the first case, note that
\begin{itemize}
	\item $|\outnbrs{D_{W_i}}{w_i}| = |\outnbrs{D_{W_i \cup \{u\}}}{w_i}| = |\outnbrs{D_{W_i \cup \{u,v\}}}{w_i}|$, 
	\item $|\innbrs{D_{W_i}}{w_i}| \leq |\innbrs{D_{W_i \cup \{u,v\}}}{w_i}|+2$, and
	\item $|\innbrs{D_{W_i \cup \{u\}}}{w_i}| = |\innbrs{D_{W_i \cup \{u,v\}}}{w_i}| + 1$.
\end{itemize}
The first holds as the out-degree of~$w_i$ does not change if~$u$ and/or~$v$ are eliminated.
To see the second, note that eliminating~$u$ and~$v$ can reduce the in-degree of $w_i$ by at most two. Finally, if~$u$ is already eliminated, then eliminating~$v$ does not add any new in-neighbors of~$w_i$ since~$u$ and~$v$ are false twins (and this property remains true even if other vertices are eliminated).
Thus, we get
\begin{align*}
	2 \mu_{D_{W_i \cup \{u\}}}(w_i) &= 2\big(  |\innbrs{D_{W_i \cup \{u\}}}{w_i}| \cdot |\outnbrs{D_{W_i \cup \{u\}}}{w_i}|\big)\\
	&= 2\big(  (|\innbrs{D_{W_i \cup \{u,v\}}}{w_i}| + 1) \cdot |\outnbrs{D_{W_i \cup \{u\}}}{w_i}|\big)\\
	&= (2|\innbrs{D_{W_i \cup \{u,v\}}}{w_i}| + 2) \cdot |\outnbrs{D_{W_i \cup \{u\}}}{w_i}|\\
	&\geq (|\innbrs{D_{W_i}}{w_i}| + |\innbrs{D_{W_i \cup \{u,v\}}}{w_i}|) \cdot |\outnbrs{D_{W_i \cup \{u\}}}{w_i}|\\
	&= \mu_{D_{W_i}}(w_i) +  \mu_{D_{W_i \cup \{u,v\}}}(w_i).
\end{align*}

The second case is analogous with the roles of in- and out-neighbors swapped, that is, 
\begin{itemize}
	\item $|\innbrs{D_{W_i}}{w_i}| = |\innbrs{D_{W_i \cup \{u\}}}{w_i}| = |\innbrs{D_{W_i \cup \{u,v\}}}{w_i}|$, 
	\item $|\outnbrs{D_{W_i}}{w_i}| \leq |\outnbrs{D_{W_i \cup \{u,v\}}}{w_i}|+2$, and
	\item $|\outnbrs{D_{W_i \cup \{u\}}}{w_i}| = |\outnbrs{D_{W_i \cup \{u,v\}}}{w_i}| + 1$.
\end{itemize}
This yields
\begin{align*}
	2 \mu_{D_{W_i \cup \{u\}}}(w_i) &= 2\big(  |\innbrs{D_{W_i \cup \{u\}}}{w_i}| \cdot |\outnbrs{D_{W_i \cup \{u\}}}{w_i}|\big)\\
	&= 2\big(|\innbrs{D_{W_i \cup \{u\}}}{w_i}| \cdot (|\outnbrs{D_{W_i \cup \{u,v\}}}{w_i}|+1)\big)\\
	&= |\innbrs{D_{W_i \cup \{u\}}}{w_i}| \cdot (2|\outnbrs{D_{W_i \cup \{u,v\}}}{w_i}|+2)\\
	&\geq |\innbrs{D_{W_i \cup \{u\}}}{w_i}| \cdot  (|\outnbrs{D_{W_i}}{w_i}| + |\outnbrs{D_{W_i \cup \{u,v\}}}{w_i}|)\\
	&= \mu_{D_{W_i}}(w_i) +  \mu_{D_{W_i \cup \{u,v\}}}(w_i).
\end{align*}

Finally, in the third case it holds that
\begin{itemize}
	\item $|\innbrs{D_{W_i}}{w_i}| = |\innbrs{D_{W_i \cup \{u\}}}{w_i}| = |\innbrs{D_{W_i \cup \{u,v\}}}{w_i}|$, and
	\item $|\outnbrs{D_{W_i}}{w_i}| = |\outnbrs{D_{W_i \cup \{u\}}}{w_i}| = |\outnbrs{D_{W_i \cup \{u,v\}}}{w_i}|$.
\end{itemize}

Thus, we get
\[2 \mu_{D_{W_i \cup \{u\}}}(w_i) = 2\cdot|\innbrs{D_{W_i \cup \{u\}}}{w_i}| \cdot |\outnbrs{D_{W_i \cup \{u\}}}{w_i}| = \mu_{D_{W_i}}(w_i) + \mu_{D_{W_i \cup \{u,v\}}}(w_i).\]
Since~$2 \mu_{D_{W_i \cup \{u\}}}(w_i) \geq \mu_{D_{W_i}}(w_i) + \mu_{D_{W_i \cup \{u,v\}}}(w_i)$ holds in all cases and we showed before that this contradicts Inequality~(\ref{eq:cost}), this concludes the proof when $|T| = 2$.

To generalize the result to larger sets~$T$, we use induction on the size~$\ell$ of~$T$.
As shown above, the base case where~$\ell=2$ holds true.
Next, assume the proposition is true for~$\ell-1$.
Let~$T = \{v_1, v_2, \ldots, v_\ell\}$, and let $\sigma$ be an optimal elimination sequence.
Let~$T' = T \setminus \{v_\ell\}$.
By the inductive hypothesis (on~$T'$) and \cref{lem:order}, we may assume that~$\sigma = (\sigma_1, v_1, v_2, \ldots v_{\ell - 1}, \sigma_2, v_\ell, \sigma_3)$ or~${\sigma = (\sigma_1, v_\ell, \sigma_2, v_1, v_2, \ldots, v_{\ell-1}, \sigma_3)}$ for some sequences~$\sigma_1,\sigma_2$, and~$\sigma_3$.
By applying the argument from the case where~$|T|=2$ to~$v_{\ell-1}$ and~$v_{\ell}$ or to~$v_\ell$ and~$v_1$, we observe optimal solutions~$(\sigma_1, v_1, v_2, \ldots, v_{\ell-1},v_{\ell}, \sigma_2, \sigma_3)$ or~${(\sigma_1, \sigma_2, v_\ell, v_1, v_2, \ldots, v_{\ell-1}, \sigma_3)}$.
This completes the proof.
\end{proof}

\section{\MinCost{} is \NP-complete}
In this section, we show that \MinCost{} is \NP-complete.
We reduce from \textsc{Vertex Cover}, which is defined as follows.

\bigskip
\defproblem{\textsc{Vertex Cover}}
{An undirected graph~$G=(V,E)$ and an integer~$k$.}
{Is there a set~$C \subseteq V$ of at most~$k$ vertices such that each edge in~$E$ has at least one endpoint in~$C$?}

\begin{theorem}
    \MinCost{} is \NP-complete. Assuming the ETH, it cannot be solved in~$2^{o(n+m)}$ time.\looseness=-1
\end{theorem}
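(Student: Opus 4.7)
The plan is to prove the theorem by (i) observing membership in \NP{}, which is immediate since a total elimination sequence has length~$|I|$ and its cost can be evaluated in polynomial time, and (ii) giving a polynomial-time many-one reduction from \textsc{Vertex Cover} to \MinCost{}. To obtain the ETH lower bound simultaneously, I would ensure that the reduction is \emph{linear}, i.e., the constructed DAG has~$O(|V|+|E|)$ vertices and arcs, so that a hypothetical $2^{o(n+m)}$-time algorithm for \MinCost{} would yield a $2^{o(|V|+|E|)}$-time algorithm for \textsc{Vertex Cover}, which is excluded under ETH.

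For the reduction, given~$(G=(V,E), k)$, the plan is to build a DAG~$D$ with one \emph{vertex gadget} per $v \in V$ containing a designated internal vertex~$x_v$, and one \emph{edge gadget} per $e=\{u,v\} \in E$ attached to~$x_u$ and~$x_v$. The guiding principle is that the cost of an edge gadget should be small if either~$x_u$ or~$x_v$ has been eliminated before the gadget is processed, and large otherwise. Concretely, I want eliminating~$x_v$ to \emph{shrink} every incident edge gadget before it is processed, lowering the Markowitz degrees incurred by the gadget's internal eliminations. Summed over all edges, the total cost of a sequence~$\sigma$ should decompose as a fixed base cost plus a penalty per edge~$e$ whose endpoints~$x_u,x_v$ are both eliminated \emph{after} the body of the edge gadget of~$e$. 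Choosing the threshold~$k'$ carefully then makes $(D,k')$ a yes-instance of \MinCost{} if and only if there is a vertex cover of size at most~$k$: the yes-direction uses the sequence that eliminates $\{x_v : v \in C\}$ first, and the no-direction observes that any cheap sequence induces, via the set of $x_v$'s eliminated early enough to deactivate their incident edge gadgets, a vertex cover of~$G$.

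To keep the bookkeeping manageable I would populate each gadget with groups of false twins; by \Cref{lem:twins} I may assume each such group is eliminated consecutively, which collapses the combinatorial choice for each vertex~$v$ to a single binary decision — "eliminate the twins of~$x_v$ early" or "eliminate them late" — and similarly for the edge gadgets. The main obstacle I anticipate is the numerical calibration of these gadgets: the per-edge penalty for a non-covered edge must \emph{strictly} outweigh any savings an adversary could extract by reordering non-gadget eliminations or by mixing early/late choices across vertices, while the overall size of the construction must remain linear in~$|V|+|E|$ so that ETH transfers. I would verify correctness by a structural (not algebraic) cost accounting over an arbitrary optimal sequence, leveraging \Cref{lem:order} to reorder eliminations freely when comparing the resulting graphs and \Cref{lem:twins} to normalize the order within twin groups. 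Once the gadget is designed and the arithmetic works out, \NP-completeness follows from the equivalence of the two instances, and the ETH bound follows from the linear size of the reduction.
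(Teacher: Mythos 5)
Your high-level strategy coincides with the paper's: membership in \NP{} is immediate, and hardness plus the ETH bound follow from a linear reduction from \textsc{Vertex Cover} in which each uncovered edge forces a unit penalty above a fixed baseline cost. However, the proposal stops exactly where the difficulty begins: no gadget is actually constructed, no threshold~$k'$ is computed, and the cost accounting that you correctly identify as ``the main obstacle'' is left unresolved. This is a genuine gap, because for vertex elimination the calibration is delicate and not routine: eliminating a vertex changes the Markowitz degrees of its neighbors in both directions (an out-neighbor $v_3$ may be replaced by \emph{two} sinks, increasing a later cost; an in-neighbor $w_2$ may disappear and be replaced by a single source $w_1$, leaving a cost unchanged), and an adversarial sequence may interleave eliminations arbitrarily. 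The paper resolves this with a specific five-vertex gadget per vertex $v$ (sources/sinks $v_1,v_4,v_5$ and internal vertices $v_2,v_3$ on a path, with cross arcs $(u_1,v_3),(u_2,v_3),(v_1,u_3),(v_2,u_3)$ per edge $\{u,v\}$) and threshold $k'=6m+4n+k$; notably it uses \emph{no} internal vertices for edge gadgets and does not invoke \Cref{lem:twins} at all, whereas your plan adds edge gadgets with their own internal eliminations and twin groups, which would make the penalty analysis harder, not easier, and whose feasibility you have not demonstrated.

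The second concrete missing piece is the backward direction. Your plan extracts a cover as ``the set of $x_v$'s eliminated early enough to deactivate their incident edge gadgets,'' but a single temporal criterion of this kind does not obviously yield a vertex cover. In the paper, the set $J$ is defined by a disjunction of ordering conditions ($v\in J$ iff $v_2$ precedes $v_3$, or $v_3$ precedes $u_2$ for some neighbor $u$), the bound $|J|\le k$ comes from showing each condition forces at least one extra unit of cost over the $6m+4n$ lower bound, and the claim that $J$ covers every edge is proved by deriving a \emph{cyclic} ordering contradiction ($u_3$ before $u_2$ before $v_3$ before $v_2$ before $u_3$) when both endpoints violate both conditions. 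None of this structure is present or implied by your sketch, so as written the proposal does not constitute a proof; it is a plausible research plan whose central construction and verification remain to be supplied.
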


\begin{proof}
    We first show containment in \NP. Note that a total elimination sequence is a permutation of~$I$ and therefore can be encoded in polynomial space.
    Moreover, given such a sequence, we can compute its cost in polynomial time by computing the vertex eliminations one after another.
    
    To show hardness, we reduce from \textsc{Vertex Cover}.
    It is well known that \textsc{Vertex Cover} is \NP-hard and cannot be solved in~$2^{o(n+m)}$ time unless the ETH fails~\cite{IPZ01,Karp72}.
    We will provide a linear reduction from \textsc{Vertex Cover} to \MinCost{} thereby proving the theorem.
    To this end, let~$(G=(V,E),k)$ be an input instance of \textsc{Vertex Cover}.
    Let~$n = |V|$ and $m = |E|$.
    We create an equivalent instance~$(D=(S \uplus I,A),k')$ of \MinCost{} as follows.
    For each vertex~$v \in V$, we create five vertices~$v_1,v_2,v_3,v_4,v_5$.
    The vertices~$v_1,v_4$ and~$v_5$ are contained in~$S$ for all~$v \in V$.
    Vertices~$v_2$ and~$v_3$ are contained in~$I$.
    Next, we add the set
    \(
      \left\{(v_1,v_2),(v_2,v_3),(v_2,v_4),(v_2,v_5),(v_3,v_4),(v_3,v_5)\right\}
    \)
    of arcs to~$A$ for each~$v\in V$.
    Finally, for each edge~$\{u,v\} \in E$, we add the arcs~$(u_1,v_3),(u_2,v_3), (v_1,u_3)$ and~$(v_2,u_3)$ to~$A$.
    Notice that every arc goes from a lower-indexed vertex to a higher-indexed vertex.
    Hence, the constructed digraph is a DAG.
    To finish the construction, we set
    \(
      k' = 6m + 4n + k.
    \)
    An illustration of the construction is depicted in \Cref{fig:example}.
    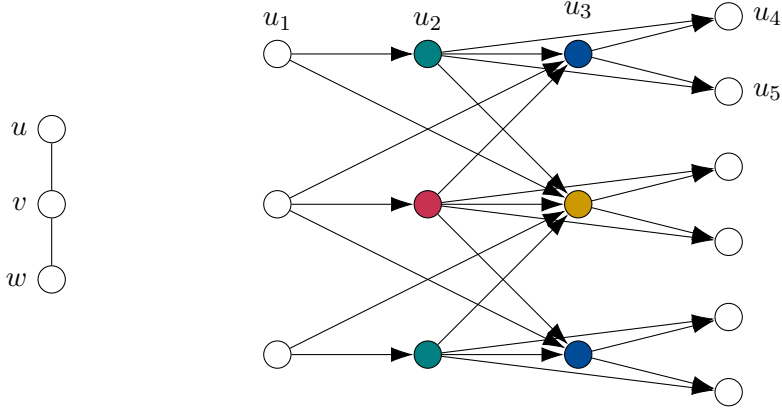
\begin{figure}[t]
        \centering
        \begin{tikzpicture}
            \node[circle,draw,label=left:$u$] at(-3,-3) (u) {};
            \node[circle,draw,label=left:$v$] at(-3,-4) (v) {} edge (u);
            \node[circle,draw,label=left:$w$] at(-3,-5) (w) {} edge (v);

            \foreach \i/\c/\d in {1/cbgreen/cbblue,2/cbred/cbyellow,3/cbgreen/cbblue}{
                \node[circle,draw,label=above:\ifthenelse{\equal{\i}{1}}{$u_1$}{}] (v\i1) at (0,-2*\i) {};
                \node[circle,draw,fill=\c,label=above:\ifthenelse{\equal{\i}{1}}{$u_2$}{}] (v\i2) at (2,-2*\i) {};
                \node[circle,draw,fill=\d,label={[label distance = .15cm]above:\ifthenelse{\equal{\i}{1}}{$u_3$}{}}] (v\i3) at (4,-2*\i) {};
                \node[circle,draw,label=right:\ifthenelse{\equal{\i}{1}}{$u_4$}{}] (v\i4) at (6,-2*\i+0.5) {};
                \node[circle,draw,label=right:\ifthenelse{\equal{\i}{1}}{$u_5$}{}] (v\i5) at (6,-2*\i-0.5) {};
                \draw[->] (v\i1) to (v\i2);
                \draw[->] (v\i2) to (v\i3);
                \draw[->] (v\i2) to (v\i4);
                \draw[->] (v\i2) to (v\i5);
                \draw[->] (v\i3) to (v\i4);
                \draw[->] (v\i3) to (v\i5);
            }
            \foreach \i/\j in {1/2,2/1,2/3,3/2}{ 
                \draw[->] (v\i1) to (v\j3);
            }
            \foreach \i/\j in {1/2,2/3}{
                \draw[->] (v\i2) to (v\j3);
                \draw[->] (v\j2) to (v\i3);
            }
        \end{tikzpicture}
        \caption{The input graph is shown on the left and the constructed graph is shown on the right. An optimal solution (corresponding to the vertex cover that only contains the middle vertex) first eliminates the red vertex, then all blue vertices, then all green vertices, and finally the yellow vertex.}
        \label{fig:example}
    \end{figure}

    Since the reduction can clearly be computed in polynomial time, it only remains to show correctness.  We proceed to show that $(G, k)$ is a yes-instance of \textsc{Vertex Cover} if and only if the constructed instance~$(D,k')$ is a yes-instance of \MinCost.

    First, assume that~$G$ contains a vertex cover~$C$ of size at most~$k$.
    We show that eliminating all vertices~$v_2$ with~$v \in C$ first, then~$u_3$ for all~$u \in V \setminus C$, followed by~$u_2$ for all~$u \in V \setminus C$, and finally all vertices~$v_3$ for each~$v \in C$ results in a total cost of~$k'$.
    To see this, note that the cost of eliminating~$v_2$ for any vertex~$v \in C$ is~$\deg(v)+3$ as~$v_2$ has a single in-neighbor~$v_1$, three out-neighbors~$v_3,v_4$, and~$v_5$, and~$\deg(v)$ out-neighbors~$u_3$ (one for each~$u \in N(v)$).
    The cost of eliminating~$u_3$ for any~$u \notin C$ afterwards is~$2\deg(u) + 2$ as by construction~$u_3$ has in-neighbors~$\{w_1 \mid w \in N(u)\} \cup \{u_2\}$, two out-neighbors~$u_4,u_5$ and no in-neighbors in $\{w_2 \mid w \in N(u)\}$, as each $w \in N(u)$ is by definition in~$C$ and hence the corresponding $w_2$ has been eliminated before.
    The cost of eliminating~$u_2$ for~$u \in V \setminus C$ is afterwards~$\deg(u)+2$ as~$u_2$ has the single in-neighbor~$u_1$, two out-neighbors~$u_4$ and~$u_5$ and~$\deg(u)$ out-neighbors~$\{v_3 \mid v \in C\}$ (note that~$u$ cannot have neighbors in~$V \setminus C$ since~$C$ is a vertex cover and~$u \notin C$).
    Finally, the cost of eliminating~$v_3$ for any~$v \in C$ is~$2\deg(v)+2$ since~$v_3$ has~$\deg(v)+1$ in-neighbors $\{w_1 \mid w \in N[v]\}$ and two out-neighbors~$v_4$ and~$v_5$.
    Summing these costs over all vertices and applying the handshake lemma\footnote{The handshake lemma states that the sum of vertex degrees is twice the number of edges~\cite{diestel2012graph}.} gives a total cost of
    \begin{align*}
      \sum_{v \in C} (\deg(v)+3) &+ \sum_{u \in V \setminus C} (2\deg(u)+2)
      + \sum_{u \in V \setminus C} (\deg(u)+2) + \sum_{v \in C} (2\deg(v)+2)\\
    &= \sum_{v \in C} (3\deg(v) + 5) + \sum_{u \in V\setminus C} (3\deg(u) +4) \\
    &= \sum_{v \in V} (3\deg(v)+4) + |C|\\
    &\leq 6m+4n+k=k'.
    \end{align*}
    This shows that the constructed instance~$(D,k')$ is a yes-instance of \MinCost.

    In the other direction, assume that there is an ordering~$\sigma$ of the vertices in~$I$ resulting in a total cost of at most~$k'$.
    Let~$J \subseteq V$ be the set of vertices such that for each~$v \in J$ it holds that~$v_2$ is eliminated before~$v_3$ or~$v_3$ is eliminated before~$u_2$ for any~$u \in N(v)$ by~$\sigma$.
    We will show that~$J$ is a vertex cover of size at most~$k$ in~$G$.
    To this end, we first provide a lower bound for the cost of eliminating any vertex, regardless of which vertices have been eliminated previously.
    Note that~$v_3$ for any vertex~$v \in V$ has two out-neighbors~$v_4$ and~$v_5$ in~$S$, $\deg(v)$ in-neighbors $\{w_1 \mid w \in N(v)\}$ in~$S$, and one additional in-neighbor which is either~$v_2$ if~$v_2$ was not eliminated before or~$v_1$ if~$v_2$ was eliminated before.
    Hence, the cost for eliminating~$v_3$ is at least~$2\deg(v)+2$.
    Moreover, the cost of eliminating~$v_2$ for any vertex~$v \in V$ is at least~$\deg(v)+2$ as~$v_2$ has the in-neighbor~$v_1 \in S$, two out-neighbors~$v_4,v_5 \in S$, and for each~$w \in N(v)$ at least one additional out-neighbor ($w_3$ if~$w_3$ was not eliminated before or~$w_4$ and~$w_5$ if~$w_3$ was eliminated before).
    Summing these costs over all vertices (and again applying the handshake lemma) gives a lower bound of~$6m+4n = k' - k$.

    The next step is to prove that~$J$ contains at most~$k$ vertices.
    To this end, note that for each vertex~$v \in J$, the cost increases by at least one over the analyzed lower bound.
    If~$v_3$ is eliminated after~$v_2$ for some~$v \in J$, then the cost of eliminating~$v_2$ increases by one as~$v_2$ has the additional out-neighbor~$v_3$.
    If~$v_3$ is eliminated before~$u_2$ for some~$v \in J$ and~$u \in N(v)$, then the cost of eliminating~$u_2$ increases by one as the out-neighbor~$v_3$ is replaced by the two out-neighbors~$v_4$ and~$v_5$.
    This immediately implies that~$|J| \leq k$.

    Finally, we show that~$J$ is a vertex cover.
    Assume towards a contradiction that this is not the case.
    Then, there is some edge~$\{u,v\} \in E$ with~$u \notin J$ and~$v \notin J$.
    By definition of~$J$, it holds that~$u_3$ is eliminated before~$u_2$, $u_3$ is eliminated after~$v_2$, $v_3$ is eliminated before~$v_2$, and~$v_3$ is eliminated after~$u_2$ by~$\sigma$.
    Note that this implies that~$\sigma$ eliminates~$u_3$ before~$u_2$, before~$v_3$, before~$v_2$, before~$u_3$, a contradiction.
    Thus, $J$ is a vertex cover of size at most~$k$ and the initial instance of \textsc{Vertex Cover} is therefore a yes-instance.
    This concludes the proof.
\end{proof}

\section{\MinRep{} is \cclass{NP}-complete}\label{sec:min-edge-count-hard}
In this section we show that \MinRep{} is \NP-complete and, assuming the ETH, it cannot be solved in subexponential time.
To this end, we reduce from \textsc{Independent Set}, which is defined as follows.

\bigskip
\defproblem{\textsc{Independent Set}}
{An undirected graph~$G=(V,E)$ and an integer~$k$.}
{Is there a set~$X \subseteq V$ of at least~$k$ vertices such that no edge in~$E$ has both endpoints in~$X$?}

\begin{theorem}
  \MinRep{} is \NP-complete. Assuming the ETH, it cannot be solved in~$2^{o(n+m)}$ time.
\end{theorem}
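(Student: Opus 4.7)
The plan is to mirror the \MinCost{} proof: first establish containment in \NP, then give a linear reduction from \textsc{Independent Set}, which is \NP-hard and, assuming the ETH, requires $2^{\Omega(n+m)}$ time~\cite{IPZ01,Karp72}. Containment is immediate: a certificate is a subset $S\subseteq I$ (encodable in linear space), and by \Cref{lem:order} the value $|A(D_S)|$ can be computed in polynomial time by performing the eliminations in any order.

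The core of the proof is the reduction. Given $(G=(V,E),k)$ I would construct $D$ out of a vertex gadget per $v\in V$ and an edge gadget per $\{u,v\}\in E$. The vertex gadget contains a distinguished internal vertex $v^*$ together with a few associated sources and sinks, wired so that eliminating $v^*$ in isolation removes some fixed amount $\alpha>0$ more arcs than the number of shortcut arcs it introduces (a clean candidate is a gadget in which $v^*$ has in-degree $1$ or out-degree $1$, so that elimination is always a net saving). The edge gadget for $\{u,v\}$ then couples the gadgets of $u$ and $v$ so that when both $u^*$ and $v^*$ are eliminated the induced shortcut arcs overlap (or an extra arc persists in the resulting graph), so that the combined savings are strictly less than $2\alpha$. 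Setting $k'=|A|-\alpha k$, reaching the target forces at least $k$ vertex-gadget internals to be eliminated without any edge penalty being triggered.

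The forward direction is then straightforward: given an independent set $X$ of size at least $k$, taking $S=\{v^*:v\in X\}$ triggers no edge gadget, so each elimination contributes its full saving of $\alpha$ and $|A(D_S)|\le|A|-\alpha k=k'$. The backward direction proceeds in three stages: (i)~argue that any $S$ witnessing $|A(D_S)|\le k'$ can be restricted, without increasing the arc count, to lie in $\{v^*:v\in V\}$, by showing that no elimination of an auxiliary (edge-gadget or support) internal vertex can beat what is achievable using only vertex-gadget internals; (ii)~deduce $|S|\ge k$ from the baseline arc savings of exactly $\alpha$ per vertex-gadget elimination; and (iii)~show that $X:=\{v:v^*\in S\}$ is independent in $G$, because otherwise an adjacent pair $\{u,v\}\subseteq X$ would trigger the corresponding edge gadget and force $|A(D_S)|>k'$.

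The main technical obstacle is step~(i): the edge gadgets (and any auxiliary internals) must be sturdy enough that no creative partial elimination of an edge gadget, alone or combined with nearby $v^*$ eliminations, yields a strictly better arc count than some subset of $\{v^*:v\in V\}$ does. I would expect this to be enforced by designing each auxiliary internal so that either its in- and out-neighborhoods are ``disjoint'' from the vertex-gadget pieces (so its elimination introduces many fresh shortcut arcs and is a net loss) or its elimination produces exactly the same shortcuts as some combination of vertex-gadget eliminations (so swapping it out for those is arc-neutral). Once the gadget design is settled, linearity of the construction ($|V(D)|,|A(D)|=O(n+m)$) is immediate since each vertex and edge contributes a constant-size gadget, and the ETH lower bound then transfers from \textsc{Independent Set} via the linear-reduction argument recalled in \Cref{sec:prelim}.
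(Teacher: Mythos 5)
Your high-level plan matches the paper's strategy (reduce from \textsc{Independent Set}, make each vertex-elimination save a fixed amount, penalize adjacent pairs, set $k'=|A|-\alpha k$), but the proposal leaves the entire gadget construction unspecified, and the one concrete design choice you float is self-defeating. If $v^*$ always has in-degree $1$ (or out-degree $1$), then eliminating it removes $1+\deg^+(v^*)$ arcs and creates at most $\deg^+(v^*)$ shortcuts, so it is a net saving of at least one \emph{no matter what has been eliminated before}. Summing over a sequence of eliminations, any set of such vertices saves at least its cardinality, so eliminating both endpoints of an edge still saves at least $2\alpha$ with $\alpha=1$; the adjacency penalty you need can never be triggered. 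The whole difficulty of the reduction is to make the saving exactly one when no neighbor has been eliminated and \emph{zero or negative} once a neighbor has been eliminated, and that requires balancing the removed and created arcs to within one unit. The paper does this by giving each $u_i$ source/sink sets $I_i,O_i$ and a shared sink set $T$, all of size exactly $4$, and pre-installing a carefully tuned number ($11$--$16$) of $I_i\to O_i$ arcs depending on the degree of $v_i$ and its position in an ordering; eliminating a neighbor raises the in- or out-degree of $u_j$ by $4$, which tips the count from ``saves one'' to ``saves nothing.'' None of this arithmetic is present in your sketch, and it is where the proof actually lives.

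Two further points. First, your ``main technical obstacle'' (step (i): ruling out creative eliminations of auxiliary edge-gadget internals) is dissolved rather than solved in the paper: each edge $\{v_i,v_j\}$ becomes a single arc $(u_i,u_j)$ between the two vertex-gadget internals, so there are no auxiliary internal vertices at all and every internal vertex is some $u_i$. The backward direction instead passes to a \emph{minimal} solution and shows that eliminating $u_j$ after an adjacent $u_i$ cannot decrease the arc count, so minimal solutions correspond to independent sets. Second, the paper does not reduce from general \textsc{Independent Set} but from the restriction to $2$-degenerate subcubic graphs of girth at least five; bounded degree is what makes the gadget arc-counts finitely tunable, and girth $\ge 5$ guarantees that the shortcut arcs $(u_i,u_j)$ created by eliminating common neighbors are never created twice or already present, which your arc accounting would also need. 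As written, your argument is a correct outline of the proof's architecture but does not yet constitute a proof.
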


\begin{proof}
	We again start by showing containment in \NP.
	We can encode a (not necessarily total) elimination sequence in polynomial space.
	Moreover, given such a sequence, we can compute the resulting DAG in polynomial time and verify that it contains at most~$k$ edges.

  To show hardness, we reduce from \textsc{Independent Set} in 2-degenerate subcubic graphs of minimum girth five, that is, graphs in which each vertex has between two and three neighbors and each cycle has length at least five.
  This problem is known to be \NP-hard and cannot be solved in~$2^{o(n+m)}$ time assuming the ETH~\cite{Kom18}.
  We will provide a linear reduction from that problem to \MinRep{} to show the theorem.
  To this end, let $(G = (V=\{v_1,v_2,\ldots,v_n\}, E), k)$ be an instance of \textsc{Independent Set} where each vertex has between two and three neighbors and no cycles of length three or four are present in~$G$.
  We will construct an instance~${(D=(S \uplus I, A), k')}$ of \MinRep.
  We begin by imposing an arbitrary total order~$\pi$ on the vertex set~$V$.
  We partition the vertices into four types based on their degree and the order~$\pi$ as follows.
  Vertices of type~1 have degree~3 and either all neighbors come earlier with respect to~$\pi$ or all neighbors come later.
  Vertices of type~2 have degree~3 and at least one earlier and at least one later neighbor with respect to~$\pi$.
  Vertices of type~3 have degree~2 and either both neighbors come earlier or both neighbors come later with respect to~$\pi$.
  Finally, vertices of type~4 have degree~2 and one of the neighbors comes earlier with respect to~$\pi$ while the other neighbor comes later.

  We now describe the construction of $D$, depicted in~\Cref{fig:MEC-reduction}.
  We begin by creating a set $T \subseteq S$ of $4$ sink vertices.
  Next, for each~$v_i \in V$, we create a vertex $u_i \in I$ as well as two vertex sets $I_i \subseteq S$ and $O_i \subseteq S$ both of size $4$.
  We add arcs from each vertex in~$I_i$ to every vertex in~$\{u_i\} \cup T$ to~$A$.
  We also add arcs from $u_i$ to each vertex in~$O_i \cup T$ to~$A$.
  Finally, we add arcs from~$I_i$ to~$O_i$ based on the type of~$v_i$ as follows.
  Note that since~$|I_i| = |O_i| = 4$, there are up to~$16$ possible arcs from vertices in~$I_i$ to vertices in~$O_i$.
  For type-1 vertices, we add~14 of the possible arcs (it does not matter which arcs we add).
  For vertices of type~2,3, and~4, we add~$16,11,$ and~$12$ arcs to~$A$, respectively.
  After completing this procedure for every vertex in $V$, we add an arc~$(u_i,u_j)$ for any edge~$\{v_i,v_j\} \in E$ where~$v_i$ comes before~$v_j$ in the order~$\pi$.
  This concludes the construction of the graph~$D$.
  Let~$n'$ and~$m'$ be the number of vertices and arcs in~$D$.
  To conclude the construction, we set~$k'=m'-k$.
  Observe that $D$ is a DAG and the number of vertices and arcs is linear in~$n+m$.
  \begin{figure}[t]
      \centering
    \begin{tikzpicture}
        \node[circle,inner sep=3pt,draw,label=$v_1$] (v1) at(-5,3) {};
        \node[circle,inner sep=3pt,draw,label=below:$v_2$] (v2) at(-6,2) {} edge(v1);
        \node[circle,inner sep=3pt,draw,label=below:$v_3$] (v3) at(-4,2) {} edge(v1) edge(v2);

          \node[circle,inner sep=5pt,draw,label=below:$T$,fill=gray] (t) at(0,0) {};
          \node[circle,inner sep=5pt,draw,label=below:$I_1$,fill=gray] (i1) at(-2,-1) {};
          \node[circle,inner sep=3pt,draw,label=$u_1$] (u1) at(-2,1) {};
          \node[circle,inner sep=5pt,draw,label=$O_1$,fill=gray] (o1) at(-4,0) {};
          \node[circle,inner sep=5pt,draw,label=left:$I_2$,fill=gray] (i2) at(-1,2) {};
          \node[circle,inner sep=3pt,draw,label=right:$u_2$] (u2) at(1,2) {};
          \node[circle,inner sep=5pt,draw,label=$O_2$,fill=gray] (o2) at(0,4) {};
          \node[circle,inner sep=5pt,draw,label=below:$I_3$,fill=gray] (i3) at(2,-1) {};
          \node[circle,inner sep=3pt,draw,label=$u_3$] (u3) at(2,1) {};
          \node[circle,inner sep=5pt,draw,label=$O_3$,fill=gray] (o3) at(4,0) {};

          \draw[->] (u1) to (u3);
          \draw[->] (u1) to (u2);
          \draw[->] (u2) to (u3);
          \foreach \i in {1,2,3}{
              \draw[->,line width = .75mm] (i\i) to (t);
              \draw[->,line width = .75mm] (i\i) to (u\i);
              \draw[->,line width = .75mm] (u\i) to (t);
              \draw[->,line width = .75mm] (u\i) to (o\i);
          }
          \draw[->,line width = .75mm] (i1) to node[midway, below=3pt]{\emph{11}} (o1);
          \draw[->,line width = .75mm] (i2) to node[midway, left=3pt]{\emph{12}} (o2);
          \draw[->,line width = .75mm] (i3) to node[midway, below=3pt]{\emph{11}} (o3);
      \end{tikzpicture}
      \caption{\label{fig:MEC-reduction}An example instance of independent set on the left and the constructed instance on the right. We assume the order~$\pi$ to be~$v_1$ first, then~$v_2$ and~$v_3$ last. Each large node~$I_i,O_i$ and~$T$ (shaded gray) represents an independent set of size~$4$. A bold arc between two nodes represents all possible arcs between the respective vertex sets (in one direction) unless a number is shown next to the arc. In this case, the number represents the number of arcs between the two sets of vertices.}
  \end{figure}
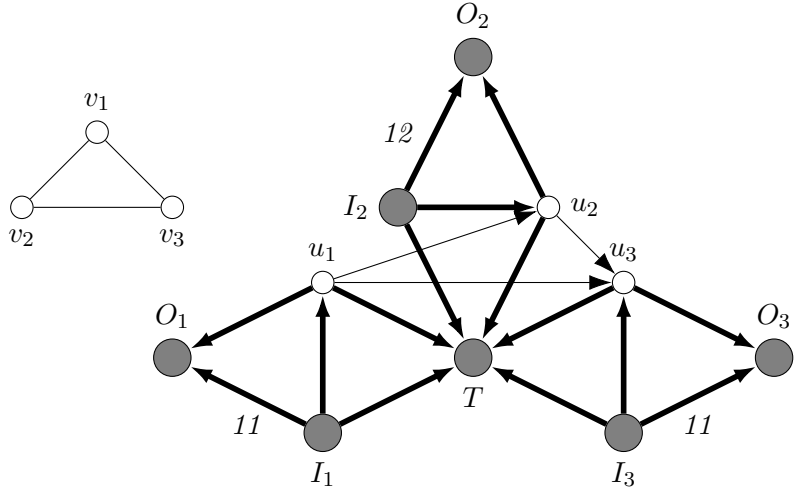

  Since the reduction can be computed in polynomial time and the constructed instance has linear size in the input instance, it only remains to show correctness.
  To this end, first assume that there exists an independent set $X \subseteq V$ of size $k$ in $G$.
  We eliminate the vertices~$X' = \{u_i \mid v_i \in X\}$.
  Note that \cref{lem:order} ensures that the order in which we eliminate the vertices does not matter.
  We will show that the resulting graph contains at most~$k'$ arcs.

    First, note that new arcs between vertices~$u_i$ and~$u_j$ might be created while eliminating vertices in~$X'$.
      However, since~$X$ is an independent set, such arcs are only created between vertices where~$N(v_i) \cap N(v_j) \neq \emptyset$.
      Since~$G$ has girth at least~5, if the elimination of a vertex in~$X'$ could create the arc~$(u_i,u_j)$, then this arc was not there initially and was not added by the elimination of a different vertex in~$X'$.
      We next show that the elimination of each vertex in~$X'$ reduces the number of arcs in the graph by exactly one.
      Let~$u_i \in X'$ be an arbitrary vertex.
      If~$v_i$ has degree three, then~$u_i$ has exactly~$15$ incident arcs, where 12 are to or from~$I_i,O_o$ and~$T$ and three are to vertices~$u_{j_1},u_{j_2},u_{j_3}$.
      The number of new arcs created in this case is~$14$ as shown next.
      For each~$\ell \in [3]$, four new arcs are created from vertices in~$I_i$ to~$u_{j_\ell}$ if~$v_i$ comes before~$v_{j_\ell}$ with respect to~$\pi$ and four new arcs from~$u_{j_\ell}$ to vertices in~$O_i$ are created otherwise.
      Hence, in any case, 12 new arcs are created.
      If~$v_i$ is of type~1, then 2 arcs are created from vertices in~$I_i$ to vertices in~$O_i$.
    If~$v_i$ is of type~2, then two additional arcs are created between the vertices~$u_{j_1},u_{j_2}$, and~$u_{j_3}$.
    Hence, if~$v_i$ has degree 3, then the elimination of vertex~$u_i$ removes~$15$ arcs and creates 14 new ones, that is, the number of arcs decreases by one.
      If~$v_i$ has degree 2, then~$u_i$ is incident to exactly~$14$ arcs (12 to and from vertices in~$I_i,O_i$, and~$T$ and two additional arcs to or from vertices~$u_{j_1}$ and~$u_{j_2}$).
      The number of new arcs created in this case is~$13$ as shown next.
      For each~$\ell \in [2]$, four new arcs are created from vertices in~$I_i$ to~$u_{j_\ell}$ if~$v_i$ comes before~$v_{j_\ell}$ with respect to~$\pi$ and four new arcs from~$u_{j_\ell}$ to vertices in~$O_i$ are created otherwise.
      In any case, 8 new arcs are created this way.
      If~$v_i$ is of type~3, then five arcs are created from vertices in~$I_i$ to vertices in~$O_i$.
      If~$v_i$ is of type~4, then four additional arcs are created from vertices in~$I_i$ to vertices in~$O_i$ and one additional arc is added between~$u_{j_1}$ and~$u_{j_2}$.
    Hence, if~$v_i$ has degree 2, then the elimination of vertex~$u_i$ removes~$14$ arcs and creates 13 new ones, that is, the number of arcs also decreases by one in this case.
      Since~$k' = m'-k$ and each of the~$k$ removals decreases the number of arcs by one, the resulting graph contains at most~$k'$ arcs, showing that the constructed instance is a yes-instance.

  For the other direction, assume that $X'$ is a solution to $(D, k')$, that is, $D_{X'}$ has at most $k'$ arcs. Further, assume that $X'$ is \emph{minimal} in the sense that, for each~$u \in X'$, $D_{X'\setminus\{u\}}$ contains more arcs than~$D_{X'}$.
  Note that this notion is well-defined due to \cref{lem:order}.
  Moreover, given any solution, a minimal one can be computed in polynomial time.
  Let $X = \{v_i \mid u_i \in X'\}$. We will show that~$X$ induces an independent set in~$G$ and that~$|X| \geq k$, that is, $X$ is an independent set of size at least~$k$ in~$G$ and the original instance is therefore a yes-instance.

  Assume toward a contradiction that $X$ is not an independent set in $G$, that is, there exist vertices~$u_i, u_j \in X'$ such that $\{v_i,v_j\} \in E$.
  We claim that~$X'$ is not minimal in this case.
  To prove this claim, note that eliminating a vertex~$u_i$ does not decrease the in-degree or out-degree of any vertex~$u_j$ (at any stage during a elimination sequence) and if~$\{v_i,v_j\} \in E$, then one of the degrees of~$u_j$ increases.
  If~$u_i$ is neither an in-neighbor nor an out-neighbor of~$u_j$, then eliminating~$u_i$ does not change either degree of~$u_j$.
  If~$u_i$ is an in-neighbor, then the out-degree of~$u_j$ remains unchanged and the in-degree increases as the vertices in~$I_i$ become in-neighbors of~$u_j$ (and they cannot be in-neighbors of~$u_j$ while~$u_i$ is not eliminated).
  If~$u_i$ is an out-neighbor of~$u_j$, then the in-degree of~$u_j$ remains unchanged and the out-degree increases as the vertices in~$O_i$ become new out-neighbors of~$u_j$.
  Let~$d$ be the number of vertices~$w$ such that~$(w,u_j)$ or~$(u_j,w)$ is an arc in~$D_{X' \setminus \{u_j\}}$ and~$w \notin I_j \cup O_j \cup T$.
  Note that~$d > \deg(v_j)$ since~$u_i \in X' \setminus \{u_j\}$ and~$\{v_i,v_j\} \in E$.
  Eliminating~$u_j$ in~$D_{X' \setminus \{u_j\}}$ removes~$d + 12$ arcs.
  If~$v_j$ has degree~3, then eliminating~$u_j$ in~$D_{X' \setminus \{u_j\}}$ creates at least~$4d = d + 3d \geq d + 12$ arcs since (i)~$d > \deg(v_j) = 3$ implies~$d \geq 4$ and (ii) eliminating~$u_j$ creates four arcs between vertices in~$I_j \cup O_j$ and each other (in- or out-)neighbor of~$u_j$ except for vertices in~$T$.
  If~$v_j$ has degree 2, then eliminating~$u_j$ creates at least four arcs from vertices in~$I_j$ to vertices in~$O_j$ plus at least~$4d$ arcs, that is, at least~$4d + 4 = d+3d+4 \geq d+13 > d+12$ arcs since~$d > \deg(v_j) = 2$ implies~$d \geq 3$.
  Hence, in any case the number of newly created arcs is at least as large as the number of removed arcs.
  That is, the number of arcs does not decrease, showing that~$X'$ is not a minimal solution.

  It only remains to show that~$|X| \geq k$.
  As analyzed in the forward direction, the elimination of any vertex~$u_i$ reduces the number of arcs by exactly one if no vertex~$u_j$ with~$\{v_i,v_j\} \in E$ was eliminated before.
  Since~$k' = m-k$, this shows that~$|X| \geq k$, concluding the proof.
\end{proof}

\section{Algorithms}
\label{sec:algorithm}

In this section, we give two simple algorithms that show that \MinCost{} and \MinRep{} can be solved in~$O^*(2^n)$ time.
We begin with \MinRep.

\begin{proposition}
	\MinRep{} can be solved in~$O(2^nn^3)$ time and with polynomial space.
\end{proposition}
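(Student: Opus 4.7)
The plan is to observe that by \Cref{lem:order}, the graph $D_\sigma$ resulting from an elimination sequence depends only on the set $X \subseteq I$ of eliminated vertices, not on their order. Hence the problem reduces to: does there exist $X \subseteq I$ such that $|A(D_X)| \leq k$? Since $|I| \leq n$, there are at most $2^n$ candidate sets.

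The algorithm is brute-force enumeration. I would iterate through all $2^{|I|}$ subsets $X$ of $I$ (for instance by incrementing a binary counter of length $|I|$), and for each such $X$, compute $D_X$ from scratch by performing the eliminations of the vertices in $X$ in an arbitrary order, then count the arcs in the resulting graph and compare with $k$. If any subset yields at most $k$ arcs, answer yes; otherwise no. Correctness is immediate from \Cref{lem:order} together with the observation that an optimal elimination sequence (of any length) is a permutation of some subset $X \subseteq I$.

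For the time bound per subset, I would represent the working graph by an $|V| \times |V|$ adjacency matrix so that checking for and inserting an arc takes $O(1)$ time. Eliminating a single vertex $v$ then costs $O(|\innbrs{D}{v}| \cdot |\outnbrs{D}{v}|) = O(n^2)$ for inserting the new arcs between $\innbrs{D}{v}$ and $\outnbrs{D}{v}$, plus $O(n)$ for removing $v$ and its incident arcs. Since $|X| \leq n$, processing one subset takes $O(n^3)$ time, including the final $O(n^2)$-time arc count. Summing over all $2^n$ subsets gives the claimed $O(2^n n^3)$ bound.

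For the space bound, at any moment we store only (i)~the bit-vector encoding the current subset, which is $O(n)$ bits, and (ii)~the adjacency matrix of the current working copy of the graph, which is $O(n^2)$ bits; the original $D$ is read-only. There is no genuine obstacle here; the proposition is essentially a correctness-plus-bookkeeping argument. The only subtlety worth flagging is appealing to \Cref{lem:order} at the outset, so that we do not have to iterate over the (much larger) set of elimination \emph{sequences} but only over subsets.
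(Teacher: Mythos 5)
Your proposal is correct and matches the paper's proof: both enumerate all $2^{|I|}$ subsets, invoke \Cref{lem:order} to justify ignoring the order, perform each elimination in $O(n^2)$ time for $O(n^3)$ per subset, and note that only the current subset and working graph need be stored. Your additional bookkeeping about the adjacency-matrix representation is consistent with, and slightly more explicit than, the paper's argument.
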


\begin{proof}
	By \cref{lem:order}, the order in which vertices in an optimal solution are eliminated is irrelevant. Hence, we can simply test for each subset~$X$ of vertices, how many arcs remain if the vertices in~$X$ are eliminated.
	Since there are~$2^n$ possible subsets and each of the at most~$n$ eliminations for each subset can be computed in~$O(n^2)$ time, all subsets can be tested in~$O(2^nn^3)$ time.
\end{proof}

We continue with \MinCost, where we use an algorithmic framework due to Bodlaender et al.~\cite{bodlaender2012note}.

\begin{proposition}
	\MinCost{} can be solved in~$O(2^nn^4)$ time. It can also be solved in~$O(4^nn^3)$ time using polynomial space.
\end{proposition}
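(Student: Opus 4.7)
The plan is to apply subset dynamic programming, leveraging \Cref{lem:order} to ensure that the intermediate DAG $\gElimSet{D}{X}$ is well-defined for every subset $X \subseteq I$, and then to use the framework of Bodlaender et al.~to trade exponential space for polynomial space at the cost of raising the base of the exponent. For the $O(2^n n^4)$-time algorithm with exponential space, I would define $f(X)$ to be the minimum cost of an elimination sequence whose underlying vertex set is exactly $X$ and guess the last vertex $v$ to be eliminated, yielding
\[
  f(X) \;=\; \min_{v \in X}\Big\{ f(X \setminus \{v\}) + \mu_{\gElimSet{D}{X \setminus \{v\}}}(v) \Big\}, \qquad f(\emptyset)=0,
\]
with the answer $f(I)$. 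Subsets are processed in increasing order of size. The transition requires $\mu_{\gElimSet{D}{X \setminus \{v\}}}(v) = |\innbrs{\gElimSet{D}{X \setminus \{v\}}}{v}| \cdot |\outnbrs{\gElimSet{D}{X \setminus \{v\}}}{v}|$, and these two neighborhoods can be identified in $O(n^2)$ time by a BFS in the original $D$ that is restricted to using intermediate vertices from $X \setminus \{v\}$. Summing $2^n$ subsets, $n$ choices of last vertex, and $O(n^2)$ per transition (with a little slack for bookkeeping) gives the stated bound, while storing one number per subset yields the $O(2^n)$ space.

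For the polynomial-space, $O(4^n n^3)$-time variant, I would invoke a balanced divide-and-conquer. For any DAG $D'$ and any set $X$ of its internal vertices, let $\mathrm{opt}(X, D')$ denote the minimum cost of an elimination sequence for $X$ in $D'$. Splitting the sequence into a first half $Y$ of size $\lfloor |X|/2 \rfloor$ and the complementary suffix, \Cref{lem:order} lets me write
\[
  \mathrm{opt}(X, D') \;=\; \min_{\substack{Y \subseteq X \\ |Y| = \lfloor |X|/2 \rfloor}} \Big\{ \mathrm{opt}(Y, D') + \mathrm{opt}(X \setminus Y, \gElimSet{D'}{Y}) \Big\}.
\]
Unrolling the recurrence, the running time satisfies $T(n) \le 2\binom{n}{n/2} T(n/2)$, which standard calculation (using $\binom{n}{n/2} \le 2^n$) resolves to $O^*(4^n)$. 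The recursion has depth $O(\log n)$ and each recursive frame stores only a subset of $I$ together with a polynomial amount of bookkeeping and the current intermediate DAG (itself of polynomial size), so the total space usage remains polynomial.

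The only non-routine step is to justify the split identity above. It relies on two facts: first, that the cost of an elimination sequence is additive in the Markowitz degrees encountered along it; and second, that by \Cref{lem:order} the intermediate graph after any prefix $Y$ of the sequence depends only on the set $Y$ and not on its internal ordering. Together these imply that optimizing the prefix and the suffix independently (within their respective starting DAGs) is sound, so the recursion is valid. The remaining arithmetic to bound $T(n)$ and the BFS-based degree computation are standard.
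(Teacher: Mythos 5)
Your proof is correct and follows essentially the same approach as the paper: the paper invokes the vertex-ordering framework of Bodlaender et al.\ as a black box (verifying that the Markowitz cost fits the required form $\min_\pi \sum_v f(D,\pi_{<v},v)$ with $f$ depending only on the \emph{set} $\pi_{<v}$, via \cref{lem:order}), whereas you explicitly unpack what that framework does --- subset dynamic programming for the $O^*(2^n)$ bound and balanced divide-and-conquer for the polynomial-space $O^*(4^n)$ bound. Both arguments rest on the same two facts you identify (additivity of the cost and order-independence of $\gElimSet{D}{X}$), so the proofs are interchangeable.
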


\begin{proof}
	As shown by Bodlaender et al.~\cite{bodlaender2012note}, any vertex ordering problem can be solved in~$O(2^nn^{c+1})$ time and in~$O(4^nn^c)$ time using polynomial space if it can be reformulated as~$\min_{\pi} \sum_{v \in V} f(D,\pi_{<v},v)$, where~$\pi$ is a permutation of the vertices, $\pi_{<v}$ is the set of all vertices that appear before~$v$ in~$\pi$, and~$f$ can be computed in~$O(n^c)$ time.      
We show that \MinCost{} fits into this framework (with~$c=3$).
We only consider vertices in~$I$, that is, non-terminal vertices as these are all the vertices that should be eliminated.
We use the function \[f(D,\pi_{<v},v) = |\innbrs{D_{\pi_{<v}}}{v}| \cdot |\outnbrs{D_{\pi_{<v}}}{v}|.\]
Note that we can compute~$D_{\pi_{<v}}$ (and therefore~$f$) in~$O(n^3)$ time.
Note that given a permutation~$\pi$, the cost of eliminating all vertices in~$I$ exactly corresponds to~$\sum_{v \in V} f(D,\pi_{<v},v)$ as the cost of eliminating a vertex~$v$ in a solution sequence following~$\pi$ is exactly 
\({|\innbrs{D_{\pi_{<v}}}{v}| \cdot |\outnbrs{D_{\pi_{<v}}}{v}| = f(D,\pi_{<v},v)}.\)
This concludes the proof.
\end{proof}

\section{Conclusion}

We have resolved a pair of longstanding open questions by showing that \MinCost{} and \MinRep{} are both \NP-complete.
Our progress opens the door to many interesting questions.
On the theoretical side, a key next step is to understand the complexities of
both problems under the more expressive \emph{edge elimination} operation~\cite{naumann2002elimination}.
There are also promising opportunities to develop approximation algorithms and/or establish lower bounds.

\section*{Acknowledgments}
The authors would like to sincerely thank Paul Hovland for drawing their attention to the studied problems at Dagstuhl Seminar 24201, for insightful discussions, and for generously reviewing a preliminary version of this manuscript, providing valuable feedback and comments.

MB was supported by the European Research Council (ERC) project LOPRE
(819416) under the Horizon 2020 research and innovation program. AC, YM, and BDS were partially supported by the Gordon \& Betty Moore Foundation under grant GBMF4560 to BDS.

\bibliographystyle{abbrv}
\bibliography{references}
\end{document}